\newcommand{\be}{\begin{equation}}
\newcommand{\en}{\end{equation}}
\newcommand{\bea}{\begin{eqnarray}}
\newcommand{\ena}{\end{eqnarray}}
\newcommand{\beano}{\begin{eqnarray*}}
\newcommand{\enano}{\end{eqnarray*}}
\newcommand{\bee}{\begin{enumerate}}
\newcommand{\ene}{\end{enumerate}}
\newcommand{\Hil}{{\cal H}}
\newcommand{\I}{{\cal I}}
\newcommand{\F}{{\cal F}}
\newcommand{\E}{{\cal E}}
\newcommand{\1}{1 \!\!\! 1}
\newtheorem{thm}{Theorem}
\newtheorem{cor}[thm]{Corollary}
\newtheorem{lemma}[thm]{Lemma}
\newtheorem{prop}[thm]{Proposition}
\newenvironment{proof}{\noindent {\bf Proof:}}{\hfill$\Box$}
\begin{document}

\thispagestyle{empty}

\vspace*{1cm}

\begin{center}
{\Large \bf Mathematical aspects of intertwining operators: the role of Riesz bases}   \vspace{2cm}\\

{\large F. Bagarello}\\
  Dipartimento di Metodi e Modelli Matematici,
Facolt\`a di Ingegneria, Universit\`a di Palermo, I-90128  Palermo, Italy\\
e-mail: bagarell@unipa.it
\end{center}


\vspace*{2cm}

\begin{abstract}
\noindent In this paper we continue our analysis of intertwining
relations for both self-adjoint and not self-adjoint operators. In
particular, in this last situation, we discuss the connection with
pseudo-hermitian quantum mechanics and the role of Riesz bases.

\end{abstract}

\vspace{2cm}

\vfill

\newpage

\section{Introduction}

In a series of recent papers, \cite{bag1,bag2,bag3,bag4}, we have
proposed a new technique which produces, given two operators $h_1$
and $x$, a hamiltonian $h_2$ which has (almost) the same spectrum of
 $h_1$ and whose respective eigenstates are
related by the intertwining operator (IO) $x$. More precisely,
calling $\sigma(h_j)$, $j=1,2$, the set of eigenvalues of $h_j$, we
find that $\sigma(h_2)\subseteq\sigma(h_1)$. These results extend
what was discussed in the previous literature on this subject,
\cite{intop}, and have the advantage of proposing a constructive
procedure: while in \cite{intop} the existence of $h_1, h_2$ and of
an operator $x$ satisfying the  intertwining condition $h_1x=xh_2$
is assumed, in \cite{bag1}-\cite{bag4} we explicitly construct $h_2$
from $h_1$ and $x$ in such a way that $h_2$ satisfies a {\em weak
form} of $h_1x=xh_2$. Moreover, as mentioned above,
$\sigma(h_2)\subseteq\sigma(h_1)$ and the eigenvectors are related
in a {\em standard} way: if
$h_1\varphi_n^{(1)}=\epsilon_n\varphi_n^{(1)}$ then, if
$x^\dagger\varphi_n^{(1)}\neq0$,
$h_2\left(x^\dagger\varphi_n^{(1)}\right)=\epsilon_n\left(x^\dagger\varphi_n^{(1)}\right)$,
see \cite{bag1}-\cite{bag4}. It is well known that this procedure is
strongly related to the supersymmetric quantum mechanics, see
\cite{CKS} and \cite{jun} for two rather complete overviews.

In \cite{bag5,bag6,bag7} we have also discussed some relations
between IO and the so-called {\em pseudo-hermitian quantum
mechanics}, \cite{mosta2,mosta3} and \cite{mosta} for a review, in connection with the
so-called pseudo-bosons, which are excitations arising from a
deformation of the canonical commutation relation. In particular, in \cite{bag5}
the role of Riesz bases appeared clearly, and the
operators intertwined by $x$ were not, in general, self-adjoint.
This has suggested an extension of our previous results to the
situation in which the { hamiltonian $h_1$ is not self-adjoint} but
is rather, for instance, pseudo-hermitian.  This is the contain of Section III,
which follows a section dedicated to some mathematical aspects of
the self-adjoint situation. Our conclusions are contained in Section
IV. It should be mentioned that the analysis of IO for non self-adjoint operators was already considered, more on a physical side, in \cite{aicd}, and more recently in \cite{bdgl}. In none of these papers, however, the role of Riesz bases was considered.

\section{Some mathematical aspects of the IOs}

Let $h_1$ be a self-adjoint hamiltonian on the Hilbert space $\Hil$,
$h_1=h_1^\dagger$, whose normalized eigenvectors, $\varphi_n^{(1)}$,
satisfy the following equation:
$h_1\varphi_n^{(1)}=\epsilon_n\varphi_n^{(1)}$,
$n\in\I_1\subseteq\Bbb{N}_0=\Bbb{N}\cup\{0\}$. Let us now consider
an operator $x$ on $\Hil$ such that, calling $N_1:=x\,x^\dagger$ and
$N_2:=x^\dagger\, x$, the following commutation rule (to be
considered in the sense of unbounded operators, in general) is
satisfied: $[N_1,h_1]=0$. Both $N_1$ and $N_2$ are positive
operators, but they could have zero in their spectra. If this is the
case, then the related $N_j$ is not invertible. Since $h_1$ and
$N_1$ commute, they can be diagonalized simultaneously. Hence it is
natural to assume that the $\varphi_n^{(1)}$'s are also eigenstates
of $N_1$. Summarizing we have \be
h_1\varphi_n^{(1)}=\epsilon_n\varphi_n^{(1)}, \qquad
N_1\varphi_n^{(1)}=\nu_n\varphi_n^{(1)}, \label{21}\en for all
$n\in\I_1$. We call $\F_1=\{\varphi_n^{(1)}, \,n\in\I_1\}$ the set
of these states. A second natural working assumption concerns the
nature of $\F_1$ which is assumed here to be complete in $\Hil$ and
orthonormal: \be
\left<\varphi_n^{(1)},\varphi_m^{(1)}\right>=\delta_{n,m},\qquad
\sum_{n\in\I_1}P_n^{(1)}=\1, \label{22}\en where $\1$ is the
identity in $\Hil$ and we have introduced the following operators:
\be
P_{n,m}^{(1)}f=\left<\varphi_n^{(1)},f\right>\,\varphi_m^{(1)},\qquad
\mbox{and } P_n^{(1)}:=P_{n,n}^{(1)}, \label{23}\en for all
$f\in\Hil$. It is well known that the $P_n^{(1)}$'s are orthogonal
projectors: $P_n^{(1)}P_m^{(1)}=\delta_{n,m}\,P_n^{(1)}$ and $\left(P_n^{(1)}\right)^\dagger=P_n^{(1)}$. It is also
known that the orthogonality of the vectors in $\F_1$ is automatic
if the eigenvalues of $h_1$ are all different:
$\epsilon_n\neq\epsilon_m$, $\forall n\neq m$. Under our
assumptions we can write $h_1$ and $N_1$ as \be
h_1=\sum_{n\in\I_1}\,\epsilon_n\,P_n^{(1)}, \qquad
N_1=\sum_{n\in\I_1}\,\nu_n\,P_n^{(1)}. \label{24}\en Moreover, since
$N_1\geq0$, all its eigenvalues are non negative: $\nu_n\geq0$,
$\forall n\in\I_1$. Of course $N_2\geq0$ as well, and since they are
related by the commutator $[x,x^\dagger]$, $N_1=N_2+[x,x^\dagger]$,
we see that,
if $[x,x^\dagger]>0$, then $N_1>0$. If, on the contrary,
$[x,x^\dagger]<0$, then $N_2>0$.

Hence the invertibility of $N_1$ or $N_2$ can be established if
$[x,x^\dagger]$ is strictly positive or negative defined. For
instance, if $x=a$, where $[a,a^\dagger]=\1$, it is clear that
$N_1=a\,a^\dagger>0$.

\vspace{2mm}

{\bf Remark:--} This is not the only case in which we can deduce the
existence of $N_j^{-1}$. For instance, if $x=a$ with $a$ satisfying
the modified commutation relation
$[a,a^\dagger]_q:=a\,a^\dagger-q\,a^\dagger\,a=\1$, $q\in[0,1]$, then, since
$N_1=[a,a^\dagger]_q+q\,N_2$, we find that $N_1>0$.

\vspace{2mm}

Let us now define the following vectors: \be
\varphi_n^{(2)}:=x^\dagger\,\varphi_n^{(1)}, \label{25}\en for
$n\in\I_1$. It may happen that for some $n$ in $\I_1$ the action of
$x^\dagger$ on $\varphi_n^{(1)}$ returns the zero vector. This means
that $\ker(x^\dagger)$ is non trivial. Of course, if
$\varphi_{n_0}^{(1)}\in\ker(x^\dagger)$, then
$N_1\varphi_{n_0}^{(1)}=0$ so that $\nu_{n_0}=0$ and the operator
$N_1$ is not invertible. Viceversa, if $N_1$ is not invertible, then $\ker(N_1)$ contains some non zero vectors. Let $\Psi$ be such a vector. Then $N_1\Psi=0$ and, consequently, $\|x^\dagger\Psi\|^2=0$. Hence $x^\dagger\Psi=0$, which means that $\Psi\in\ker(x^\dagger)$. In other words: $\Psi\in\ker(N_1)$ if and only if $\Psi\in\ker(x^\dagger)$.

Let now $\I_2=\{n\in\I_1:\,
\varphi_n^{(2)}\neq0\}$, and let
$\F_2=\{\varphi_n^{(2)},\,n\in\I_2\}$. Of course, if
$\ker(x^\dagger)=\{0\}$, then $\I_1=\I_2$, but in general we only
have the inclusion $\I_2\subseteq\I_1$.

The set $\F_2$ could consist, at least in principle, of very few
vectors if compared with $\F_1$. So the problem of completeness of
$\F_2$ in $\Hil$ arises. Moreover, if $x$ is not unitary (or at least
isometric), we don't know if different $\varphi_n^{(2)}$'s are
orthogonal to each other or if they are, by chance, eigenstates of
some interesting operator. This is exactly what happens. In
particular we can prove the following
\begin{prop}\label{prop2}
Under the above hypotheses the $\varphi_n^{(2)}$'s satisfy the
eigenvalue equation \be N_2\varphi_n^{(2)}=\nu_n\varphi_n^{(2)},
\label{26}\en for all $n\in\I_2$. Moreover, if for $n,m\in\I_2$,
$n\neq m$, $\nu_n\neq\nu_m$, then
$\left<\varphi_n^{(2)},\varphi_m^{(2)}\right>=0$. Finally, the set
$\F_2$ is complete in $\Hil$ if and only if $N_2$ is invertible.

\end{prop}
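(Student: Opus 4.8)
My plan is to get (\ref{26}) and the orthogonality out of the way by short direct computations, and then to concentrate on the one structural point, the identification of $\F_2^{\perp}$. For (\ref{26}), fix $n\in\I_2$ and compute
\[
N_2\,\varphi_n^{(2)}=x^\dagger x\bigl(x^\dagger\varphi_n^{(1)}\bigr)=x^\dagger\bigl(x\,x^\dagger\bigr)\varphi_n^{(1)}=x^\dagger N_1\varphi_n^{(1)}=\nu_n\,x^\dagger\varphi_n^{(1)}=\nu_n\,\varphi_n^{(2)},
\]
using only the definitions of $N_1,N_2$ and the second relation in (\ref{21}); in the unbounded case the domain memberships needed along the way ($\varphi_n^{(1)}\in D(x^\dagger)$, hence $\varphi_n^{(2)}\in D(N_2)$) follow from $\varphi_n^{(1)}\in D(N_1)$. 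For the orthogonality the quickest route is
\[
\left<\varphi_n^{(2)},\varphi_m^{(2)}\right>=\left<x^\dagger\varphi_n^{(1)},x^\dagger\varphi_m^{(1)}\right>=\left<\varphi_n^{(1)},N_1\varphi_m^{(1)}\right>=\nu_m\left<\varphi_n^{(1)},\varphi_m^{(1)}\right>=\nu_m\,\delta_{n,m},
\]
which already vanishes for $n\neq m$; so under the stated hypotheses the assumption $\nu_n\neq\nu_m$ is not actually needed, although it is the natural one once the orthonormality of $\F_1$ is dropped (as in Section III), where one would instead combine (\ref{26}) with $N_2=N_2^\dagger$ to obtain $(\nu_n-\nu_m)\left<\varphi_n^{(2)},\varphi_m^{(2)}\right>=0$.

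The substance is the last assertion, which I would prove by establishing the two implications. Suppose first $N_2$ is invertible. For $g$ in the domain of $N_2$, expand $xg$ along the orthonormal basis $\F_1$ and apply $x^\dagger$ term by term:
\[
N_2 g=x^\dagger(xg)=\sum_{n\in\I_1}\left<\varphi_n^{(1)},xg\right>x^\dagger\varphi_n^{(1)}=\sum_{n\in\I_2}\left<\varphi_n^{(1)},xg\right>\varphi_n^{(2)},
\]
the terms with $n\notin\I_2$ dropping out because there $x^\dagger\varphi_n^{(1)}=0$ by the definition of $\I_2$. Thus $N_2 g$ lies in the closed linear span of $\F_2$; since $N_2$ is invertible (hence injective) its range is dense in $\Hil$ (indeed $\overline{\mathrm{Ran}\,N_2}=(\ker N_2)^{\perp}$), so that closed span is all of $\Hil$ and $\F_2$ is complete. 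Conversely, if $N_2$ is not invertible pick $\Psi\neq0$ with $N_2\Psi=0$; then $\|x\Psi\|^2=\left<\Psi,N_2\Psi\right>=0$, so $x\Psi=0$, and hence $\left<\Psi,\varphi_n^{(2)}\right>=\left<\Psi,x^\dagger\varphi_n^{(1)}\right>=\left<x\Psi,\varphi_n^{(1)}\right>=0$ for every $n\in\I_2$, so $\F_2$ fails to be complete. Conceptually the two halves say $\F_2^{\perp}=\ker x=\ker N_2$, the last equality being the exact analogue --- with $x$ and $x^\dagger$ interchanged --- of the identity $\ker N_1=\ker x^\dagger$ established just before the proposition.

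The part I expect to need genuine care, rather than the algebra, is the unbounded-operator bookkeeping: moving $x$ and $x^\dagger$ across inner products and, above all, through the infinite sum. For the term-by-term application of $x^\dagger$ one checks, using the orthogonality just proved and $\|\varphi_n^{(2)}\|^2=\nu_n$, that $\sum_{n\in\I_2}\nu_n\,\bigl|\left<\varphi_n^{(1)},xg\right>\bigr|^2=\|N_2 g\|^2<\infty$ (e.g.\ from the polar decomposition of $x$), so the partial sums converge and the closedness of $x^\dagger$ forces the displayed identity; for bounded $x$ everything is immediate. One should also read ``$N_2$ invertible'' as ``$\ker N_2=\{0\}$'', equivalently ``$0$ is not an eigenvalue of the positive operator $N_2$'': that is the notion for which the stated equivalence is correct, and the one consistent with the completeness conclusion, since a complete $\F_2$ is an orthogonal family of eigenvectors of $N_2$ spanning $\Hil$ with the strictly positive eigenvalues $\nu_n$, $n\in\I_2$ (positive because $\nu_n=\|\varphi_n^{(2)}\|^2>0$).
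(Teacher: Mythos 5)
Your proof is correct, and two of its three parts diverge from the paper's in ways worth recording. For the orthogonality claim the paper simply invokes the eigenvalue equation (\ref{26}) together with the self-adjointness of $N_2$ (``our second claim is straightforward''), which is why it carries the hypothesis $\nu_n\neq\nu_m$; your direct computation $\left<\varphi_n^{(2)},\varphi_m^{(2)}\right>=\left<\varphi_n^{(1)},N_1\varphi_m^{(1)}\right>=\nu_m\delta_{n,m}$ exploits the orthonormality of $\F_1$ assumed in (\ref{22}) and correctly shows that the hypothesis is redundant in this section (it becomes essential only in the non-self-adjoint setting of Section III, where $\F_1$ is no longer orthonormal). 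For the forward implication of the completeness statement the paper argues by orthogonal complements: if $f\perp\F_2$ then $\left<xf,\varphi_n^{(1)}\right>=0$ for $n\in\I_2$, and (after a two-case discussion of $\I_2=\I_1$ versus $\I_2\subset\I_1$) also for $n\in\I_1\setminus\I_2$ since there $x^\dagger\varphi_n^{(1)}=0$; completeness of $\F_1$ then gives $xf=0$, hence $N_2f=0$, hence $f=0$. You instead show $\mathrm{Ran}\,N_2\subseteq\overline{\mathrm{span}}\,\F_2$ by expanding $xg$ in $\F_1$ and pushing $x^\dagger$ through the sum, and conclude from the density of the range of an injective positive operator. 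Both are valid; the paper's route is slightly more elementary because it only ever applies $x$ to a single vector and never needs to interchange $x^\dagger$ with an infinite sum, whereas yours requires the closedness argument you supply (and buys, as a by-product, the cleaner structural statement $\F_2^{\perp}=\ker x=\ker N_2$). Your reverse implication is identical to the paper's, and your reading of ``invertible'' as $\ker N_2=\{0\}$ is the one the paper implicitly uses.
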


\begin{proof}
Since $n\in\I_2$ the vector $\varphi_n^{(1)}$ does not belong to
$\ker(x^\dagger)$, so that $\varphi_n^{(2)}\neq0$. We have
$$N_2\varphi_n^{(2)}=\left(x^\dagger\,x\right)\left(\,x^\dagger\,\varphi_n^{(1)}\right)=x^\dagger\,
N_1\,\varphi_n^{(1)}=\nu_n\,\left(\,x^\dagger\,\varphi_n^{(1)}\right)=\nu_n\,\varphi_n^{(2)}$$
Then our second claim is straightforward.

Let us now assume that $N_2^{-1}$ exists. Then $\F_2$ is complete.
Indeed, let $f$ be an element of $\Hil$ such that
$\left<f,\varphi_n^{(2)}\right>=0$ for all $n\in\I_2$. Hence, for
these values of $n$, we also have
$\left<x\,f,\varphi_n^{(1)}\right>=0$.

We consider separately two cases: $\I_2=\I_1$ and $\I_2\subset\I_1$.
In the first case, since $\F_1$ is complete by assumption, we
conclude that $x\,f=0$, which also implies that $x^\dagger
x\,f=N_2\,f=0$. But, since $N_2$ is invertible by assumption, $f=0$.
Hence $\F_2$ is complete.

Suppose now that $\I_2\subset\I_1$. Then equality $\left<x\,f,\varphi_n^{(1)}\right>=0$ for all $n\in\I_2$, implies that  the vector $x\,f$ can be
written as a linear combination (which could involve infinite
elements but which is clearly  convergent) of the form
$x\,f=\sum_{k\in\Gamma}\alpha_k\varphi_k^{(1)}$, where
$\Gamma=\I_1\setminus\I_2$ and the $\alpha_k$'s are complex
constants. Taking the scalar product of both sides of this expansion
for $\varphi_l^{(1)}$, $l\in\Gamma$, since for these values of $l$
$\alpha_l=\left<\varphi_l^{(1)},x\,f\right>=\left<x^\dagger\,\varphi_l^{(1)},f\right>=0$,
we deduce that $\alpha_l=0$ for all $l\in\Gamma$. Hence $x\,f=0$
and, as before, since $N_2^{-1}$ does exist, $f=0$.

To prove the inverse implication we will show that if $N_2^{-1}$
does not exist, then $\F_2$ is not complete. Indeed, since
$N_2^{-1}$ does not exist there exists a vector $g\in\Hil$, $g\neq
0$, such that $N_2\,g=0$. This implies that
$0=\left<g,N_2\,g\right>=\|x\,g\|^2$, so that $x\,g=0$. Suppose now
that $\F_2$ is complete. Using the equality $x\,g=0$ we deduce that,
for all $n\in\I_2$,
$\left<g,\varphi_n^{(2)}\right>=\left<x\,g,\varphi_n^{(1)}\right>=0$.
Hence, due to our assumption on $\F_2$, we would have $g=0$ which is
against our original hypothesis.

\end{proof}

It is interesting to notice that, while it may happen that
$x^\dagger\varphi_n^{(1)}=0$ for some $n\in\I_1$, it never happens
that $x\,\varphi_n^{(2)}=0$ for any $n\in\I_2$. This fact, which is
clearly due to the procedure we are adopting, follows from the
following considerations:

first we remark that, since $\forall n\in\I_2$, $\varphi_n^{(2)}\neq 0$, then
$$
0\neq \|\varphi_n^{(2)}\|^2=\left<x^\dagger \varphi_n^{(1)},x^\dagger\varphi_n^{(1)}\right>=
\left< \varphi_n^{(1)},N_1\varphi_n^{(1)}\right>=\nu_n.
$$
Hence $\nu_n>0$ $\forall n\in\I_2$. But, for these $n$'s, we also have
$
x\,\varphi_n^{(2)}=x\,x^\dagger\,\varphi_n^{(1)}=N_1\,\varphi_n^{(1)}=\nu_n\,\varphi_n^{(1)}.
$
Then, as stated,  $x\,\varphi_n^{(2)}\neq0$ and moreover \be
\varphi_n^{(1)}=\frac{1}{\nu_n}\,x\,\varphi_n^{(2)}=\frac{1}{\|\varphi_n^{(2)}\|^2}\,x\,\varphi_n^{(2)}.
\label{28}\en The
normalized vectors associated to $\F_2$ are therefore
$\hat\F_2=\{\hat\varphi_n^{(2)}=\frac{1}{\sqrt{\nu_n}}\,\varphi_n^{(2)},\,n\in\I_2\}$.
By means of the Proposition 1 above we deduce that each
$\hat\varphi_n^{(2)}$ is an eigenstate of $N_2$, that they form an
orthonormal set, at least if all the $\nu_n$'s are different, and that $\hat\F_2$ is complete if and only if $N_2$
is invertible. In analogy with what we have done before, let us
define the operators $P_n^{(2)}$ and $\hat P_n^{(2)}$ as follows:
$P_n^{(2)}\,f=\left<\varphi_n^{(2)},f\right>\,\varphi_n^{(2)}$ and
$\hat
P_n^{(2)}\,f=\left<\hat\varphi_n^{(2)},f\right>\,\hat\varphi_n^{(2)}$.
It may be worth remarking that while the $\hat P_n^{(2)}$'s are
orthogonal projections, the $P_n^{(2)}$'s are not, since
$\left(P_n^{(2)}\right)^2\neq P_n^{(2)}$. It is also possible to prove the
following
\begin{cor}
Let us assume that for $n,m\in\I_2$, $n\neq m$, $\nu_n\neq\nu_m$, and
that $N_2^{-1}$ exists. Hence \be
N_2=\sum_{k\in\I_2}P_n^{(2)}=\sum_{k\in\I_2}\nu_n\,\hat P_n^{(2)},
\label{29}\en
\end{cor}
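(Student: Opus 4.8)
The plan is to read off the two identities in (\ref{29}) from the spectral expansion of $N_2$ along the family $\hat\F_2$. All the necessary ingredients are already in place: under the present hypotheses ($N_2^{-1}$ exists and the $\nu_n$'s are pairwise distinct for $n\in\I_2$), the normalized set $\hat\F_2=\{\hat\varphi_n^{(2)}\}_{n\in\I_2}$ is a complete orthonormal set in $\Hil$ — orthonormal because the eigenvalues are distinct, complete by the last statement of Proposition \ref{prop2} applied to $\hat\F_2$ — and by Proposition \ref{prop2} each $\hat\varphi_n^{(2)}$ is an eigenvector of $N_2$ with eigenvalue $\nu_n>0$, i.e.\ $N_2\hat\varphi_n^{(2)}=\nu_n\hat\varphi_n^{(2)}$.

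First I would dispose of the equality between the two series. Since $\varphi_n^{(2)}=\sqrt{\nu_n}\,\hat\varphi_n^{(2)}$ with $\nu_n>0$ real, for every $f\in\Hil$
\[
P_n^{(2)}f=\left<\varphi_n^{(2)},f\right>\varphi_n^{(2)}=\nu_n\left<\hat\varphi_n^{(2)},f\right>\hat\varphi_n^{(2)}=\nu_n\,\hat P_n^{(2)}f ,
\]
so that $P_n^{(2)}=\nu_n\,\hat P_n^{(2)}$ term by term, and the two sums in (\ref{29}) coincide. It therefore only remains to prove $N_2=\sum_{n\in\I_2}\nu_n\,\hat P_n^{(2)}$.

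Being an orthonormal basis, $\hat\F_2$ provides the resolution of the identity $\sum_{n\in\I_2}\hat P_n^{(2)}=\1$, i.e.\ $f=\sum_{n\in\I_2}\left<\hat\varphi_n^{(2)},f\right>\hat\varphi_n^{(2)}$ for all $f\in\Hil$. Fix now $f\in D(N_2)$ and put $g:=N_2f$. Using that $N_2$ is symmetric and that $\hat\varphi_n^{(2)}\in D(N_2)$ is an eigenvector, the Fourier coefficients of $g$ in the basis $\hat\F_2$ are
\[
\left<\hat\varphi_n^{(2)},g\right>=\left<\hat\varphi_n^{(2)},N_2f\right>=\left<N_2\hat\varphi_n^{(2)},f\right>=\nu_n\left<\hat\varphi_n^{(2)},f\right> .
\]
Expanding $g$ in $\hat\F_2$ then yields $N_2f=g=\sum_{n\in\I_2}\nu_n\left<\hat\varphi_n^{(2)},f\right>\hat\varphi_n^{(2)}=\sum_{n\in\I_2}\nu_n\,\hat P_n^{(2)}f$, which is the asserted identity (understood, as usual here, on $D(N_2)$).

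The only point that really needs care — and the step I expect to be the main obstacle — is the unbounded–operator bookkeeping. Strictly speaking, to claim that $N_2$ \emph{equals} $\sum_{n\in\I_2}\nu_n\hat P_n^{(2)}$ one should also check that the natural domain of the right-hand side, $\{f\in\Hil:\ \sum_{n\in\I_2}\nu_n^2|\langle\hat\varphi_n^{(2)},f\rangle|^2<\infty\}$, actually coincides with $D(N_2)$, and not merely contains a core of it. When $x$ (hence $N_2=x^\dagger x$) is bounded this is immediate. In general it is precisely the content of the spectral theorem for a self-adjoint operator admitting a complete orthonormal set of eigenvectors; alternatively, the computation above already exhibits the inclusion $N_2\subseteq\sum_{n\in\I_2}\nu_n\hat P_n^{(2)}$ between two self-adjoint operators, which forces their equality. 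Either remark closes the proof.
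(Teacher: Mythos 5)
Your proof is correct and follows exactly the route the paper indicates (the paper omits the details, stating only that the argument "makes use of the resolution of the identity for $\hat\F_2$"): you establish $P_n^{(2)}=\nu_n\hat P_n^{(2)}$ and then read off $N_2=\sum_{n\in\I_2}\nu_n\hat P_n^{(2)}$ from the resolution of the identity associated with the complete orthonormal set $\hat\F_2$. Your additional remark on the domain bookkeeping for the unbounded case is a sensible refinement beyond what the paper records, but the core argument is the same.
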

The proof, which makes use of the resolution of the identity for
$\hat \F_2$, is trivial and will not be given here. More results on
$\ker(x^\dagger)$ will be given, under generalized assumptions, in
the next section.

\vspace{2mm}

As in \cite{bag1}-\cite{bag4} we now define $h_2=N_2^{-1}(x^\dagger h_1 x)$. We know that $h_2\varphi_n^{(2)}=\epsilon_n\varphi_n^{(2)}$, for all $n\in\I_2$. Moreover, among other properties, we also know that $h_2=h_2^\dagger$ if and only if $h_1=h_1^\dagger$. Using Proposition \ref{prop2}, which applies since we are here assuming that $N_2^{-1}$ does exist, we deduce that $\F_2$ is complete so that $h_2$ can be written as
\be
h_2=\sum_{n=0}^\infty\epsilon_n\hat P_n^{(2)}=\sum_{n=0}^\infty\frac{\epsilon_n}{\nu_n}P_n^{(2)}.
\label{210}\en
Adopting the Dirac bra-ket notation we find that $xP_n^{(2)}=\nu_n|\varphi_n^{(1)}\left>\right<\varphi_n^{(2)}|$ and $P_n^{(1)}x=|\varphi_n^{(1)}\left>\right<\varphi_n^{(2)}|$ for all $n$. Then it follows that $xh_2=h_1x$, which means that $x$ intertwines between $h_2$ and $h_1$, as in the standard papers on this subject. We will recover this intertwining relation in the following section, generalizing our previous results, \cite{bag1}-\cite{bag4}.

\vspace{3mm}

{\bf Example:--} ({\em the ubiquitous harmonic oscillator})

Many aspects of our procedure can be quite well illustrated by means
of the canonical commutation relation $[a,a^\dagger]=\1$ arising
from the hamiltonian of a quantum harmonic oscillator, which in
suitable units and putting to zero the ground state energy is
$h_1=a^\dagger\,a$. If $\varphi_0^{(1)}$ is such that
$a\varphi_0^{(1)}=0$, then the eigenstates of $h_1$ are the usual
ones:
$\varphi_n^{(1)}=\frac{1}{\sqrt{n!}}\,(a^\dagger)^n\,\varphi_0^{(1)}$,
whose related set $\F_1$ is orthonormal and complete in $\Hil$. Let us now
take $x=a^\dagger$. Hence $N_1=x\,x^\dagger=h_1$ and
$N_2=x^\dagger\,x=a\,a^\dagger$. Hence $N_2$ is invertible and $[h_1,N_1]=0$.
Moreover we see that $\I_1=0,1,2,\ldots$ while, since
$x^\dagger\varphi_0^{(1)}=a\varphi_0^{(1)}=0$, $\I_2=1,2,\ldots$.
Hence $\I_2\subset\I_1$. Nevertheless the set $\hat\F_2$ coincides
exactly with $\F_1$. Hence it is complete in $\Hil$, as expected
because of Proposition 1. In this case we find easily that $h_2=N_2^{-1}(a^\dagger h_1 a)=aa^\dagger=h_1+\1$.

If on the other way we take $x=a$ then
$N_2=x^\dagger\,x=a^\dagger\,a=h_1$ which is not invertible.
This is in agreement with the fact that the set $\hat\F_2$ is now a
proper subset of $\F_1$ since $\varphi_0^{(1)}$ belongs to $\F_1$
but not to $\hat\F_2$. Hence $\hat\F_2$ is not complete in $\Hil$,
as expected because of Proposition 1. Moreover we find $h_2=h_1-\1$, whose eigenvalues are non negative since its eigenvectors are  $\{\varphi_1^{(1)}, \varphi_2^{(1)}, \varphi_3^{(1)},\ldots\}$.

\vspace{3mm}

{\bf Example:--} ({\em the deformed harmonic oscillator: quons})

Following \cite{bag4} we consider two operators, $B$ and $B^\dagger$, which satisfy the modified commutation relation $[B,B^\dagger]_q:=B,B^\dagger-q B^\dagger B=\1$, $q\in[0,1]$. Let $\varphi_0^{(1)}$ be the vacuum of $B$: $B\varphi_0^{(1)}=0$. Let furthermore $h_1=B^\dagger B$. Then, putting
\be\varphi_n^{(1)}=\frac{1}{\beta_0\cdots\beta_{n-1}}\,{B^\dagger}^n\,\varphi_0^{(1)}=
\frac{1}{\beta_{n-1}}B^\dagger\varphi_{n-1}^{(1)},\qquad n\geq 1,\label{21ex}\en
we have $h_1\varphi_n^{(1)}=\epsilon_n\varphi_n^{(1)}$, with $\epsilon_0=0$, $\epsilon_1=1$ and $\epsilon_n=1+q+\cdots+q^{n-1}$ for $n\geq 1$. Also, the normalization is found to be $\beta_n^2=1+q+\cdots+q^n$, for all $n\geq0$. Hence $\epsilon_n=\beta_{n-1}^2$ for all $n\geq1$.  The set of the $\varphi_n^{(1)}$'s  spans the Hilbert space $\Hil$ and they are mutually orhonormal: $<\varphi_{n}^{(1)},\varphi_{k}^{(1)}>=\delta_{n,k}$.

We now take, as in the previous example, $x=B^\dagger$. Then  $N_1=B^\dagger B$ and $N_2=BB^\dagger$ and, obviously, $[h_1,N_1]=0$. Moreover, since $N_2=BB^\dagger=\1+q\,B^\dagger B$, and since $B^\dagger B$ is a positive operator, $N_2^{-1}$ exists. We easily find that $h_2:=N_2^{-1}\left(x^\dagger\,h_1\,x\right)=q\,h_1+\1$
while
$$
\varphi_n^{(2)}=B\varphi_n^{(1)}= \left\{
\begin{array}{ll}
0\hspace{2.3cm} \mbox{ if } n=0  \\
\beta_{n-1}\varphi_{n-1}^{(1)}\qquad \mbox{ if } n\geq1.\\
\end{array}
\right.
$$
Then $\hat\F_2$ coincides
again with $\F_1$, and so it is complete in $\Hil$, as expected
because of Proposition 1. If we rather take $x=B$ it is not hard to check that completeness is lost because $
\varphi_0^{(1)}$ does not belong to $\hat\F_2$. This is in agreement with the fact that, since $N_2\varphi_0^{(1)}=0$, $N_2$ is not invertible. Incidentally we observe that $h_2=\frac{1}{q}(h_1-\1)$.

\section{Losing  self-adjointness}

In the previous section we have considered the case in which the two
operators $h_1$ and $h_2$ related by the intertwining operator $x$
are self-adjoint. Now we will remove this assumption and we will
discuss some interesting consequences of this more general
situation. In particular we will see that, in this new context, there are strong indications which suggest to replace o.n.
bases  by Riesz bases.

Let $\Theta_1$ be a non necessarily self-adjoint operator on $\Hil$
which admits a set $\F_1=\{\varphi_n^{(1)}, \,n\geq 0\}$ of
eigenstates: \be \Theta_1\varphi_n=\epsilon_n\varphi_n,\qquad n\geq
0, \label{31}\en for some (in general complex) $\epsilon_n$. In this
section we will always work under the simplifying
assumption that all these eigenvalues have multiplicity 1. This is
useful to simplify the formulation of our results, but it could be
avoided most of the times. However, examples of this situation are discussed, for instance, in \cite{bag5,bag6} and references therein. A class of new examples generalizing the so-called Landau levels and in which the multiplicity of each energetic level is infinity will be discussed in a paper which is now in preparation, \cite{abg}.  As in the previous section, we will
assume now that an operator $x$ exists, acting on $\Hil$, such that,
calling $N_1=x\,x^\dagger$ and $N_2=x^\dagger\,x$, $N_1$ commutes
(in the sense of unbounded operators, if needed) with $\Theta_1$ and
that $N_2$ is invertible. Depending on the fact that $x$ is
invertible by itself or not, we introduce two apparently different
operators: \be \Theta_2=\left\{
    \begin{array}{ll}
        x^{-1}\Theta_1\,x,\hspace{1.6cm}\mbox{ if } x^{-1} \mbox{ exists};  \\
        N_2^{-1}(x^\dagger\Theta_1 x), \hspace{0.8cm} \mbox{ otherwise}.\\
       \end{array}
        \right.
\label{32}\en To distinguish between these two we will sometimes call in the
following $\Theta_2^{(\alpha)}=x^{-1}\Theta_1\,x$ and
$\Theta_2^{(\beta)}=N_2^{-1}(x^\dagger\Theta_1 x)$. It is clear
that, when $x^{-1}$ exists, $\Theta_2^{(\beta)}$  coincides with
$\Theta_2^{(\alpha)}$. However, when $x^{-1}$ does not exist, $\Theta_2^{(\alpha)}$ makes no sense but we can
still introduce $\Theta_2^{(\beta)}$. When our statements apply  both for $\Theta_2^{(\alpha)}$ and for $\Theta_2^{(\beta)}$, to simplify the notation we just use $\Theta_2$.

\vspace{2mm}

{\bf Remark:--} Analogously to what discussed in the previous section, the existence of an operator $x$ satisfying $[xx^\dagger,\Theta_1]=0$ has interesting consequences concerning the possibility of finding
a second operator $\Theta_2$ with (almost) the same eigenvalues (real or complex, now it doesn't matter) as $\Theta_1$, see below.

\vspace{2mm}

We define, as usual, \be \varphi_n^{(2)}=x^\dagger\varphi_n^{(1)},
\qquad n\geq 0, \label{33}\en which, if
$\varphi_n^{(1)}\notin\ker(x^\dagger)$, is an eigenstate of
$\Theta_2$ with eigenvalue $\epsilon_n$, independently of whether
$\Theta_1$ is self-adjoint or not. Using $[N_1,\Theta_1]=0$ it is in
fact quite easy to check that, if $\varphi_n^{(2)}\neq0$,
$\Theta_2\varphi_n^{(2)}=\epsilon_n\varphi_n^{(2)}$, for all
$n\geq0$.

\vspace{2mm}

{\bf Remark:--} It should be mentioned that, if $x$ is invertible,
we could also define the eigenstates of $\Theta_2^{(\alpha)}$ as
$\tilde\varphi_n^{(2)}=x^{-1}\varphi_n^{(1)}$, which seems more
appropriate since nothing should be required to the commutator
between $N_1$ and $\Theta_1$, at least as far as the eigenvalue
equation for $\Theta_2^{(\alpha)}$ is concerned. But, in our
previous papers, \cite{bag1}-\cite{bag4}, we have focused our
attention on the situation in which $x^{-1}$ does not necessarily
exist, and this will be our main interest also here. We
will return to this aspect later on.

\vspace{2mm}

Let us now go back for a moment to the requirement
$\varphi_n^{(1)}\notin\ker(x^\dagger)$ above. It is possible to
prove the following result, which extends what already stated in Section II.

\begin{lemma}\label{lemma1}
With the above definitions, for a given $n\geq 0$,
$\varphi_n^{(1)}\in\ker(x^\dagger)$ if and only if
$\varphi_n^{(1)}\in\ker(N_1)$ or, equivalently, if and only if
$\varphi_n^{(2)}\in\ker(x)$.
\end{lemma}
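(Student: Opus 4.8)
The plan is to prove the chain of equivalences by establishing each of the three implications one needs, using only the definitions $N_1 = x\,x^\dagger$ and $\varphi_n^{(2)} = x^\dagger\varphi_n^{(1)}$ together with the positivity of the inner product. The statement to prove is that, for fixed $n\geq 0$, the following are equivalent: (i) $\varphi_n^{(1)}\in\ker(x^\dagger)$; (ii) $\varphi_n^{(1)}\in\ker(N_1)$; (iii) $\varphi_n^{(2)}\in\ker(x)$.

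First I would observe that (i) $\Rightarrow$ (ii) is immediate: if $x^\dagger\varphi_n^{(1)}=0$ then $N_1\varphi_n^{(1)} = x(x^\dagger\varphi_n^{(1)}) = x\,0 = 0$. For the converse (ii) $\Rightarrow$ (i), I would repeat the argument already used in Section II for a general vector $\Psi\in\ker(N_1)$, now specialized to $\varphi_n^{(1)}$: from $N_1\varphi_n^{(1)}=0$ we get
\[
0 = \left<\varphi_n^{(1)}, N_1\varphi_n^{(1)}\right> = \left<\varphi_n^{(1)}, x\,x^\dagger\varphi_n^{(1)}\right> = \left<x^\dagger\varphi_n^{(1)}, x^\dagger\varphi_n^{(1)}\right> = \|x^\dagger\varphi_n^{(1)}\|^2,
\]
hence $x^\dagger\varphi_n^{(1)}=0$, i.e. $\varphi_n^{(1)}\in\ker(x^\dagger)$. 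This already gives (i) $\Leftrightarrow$ (ii).

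For the equivalence with (iii), note that $\varphi_n^{(1)}\in\ker(x^\dagger)$ means precisely $\varphi_n^{(2)}=x^\dagger\varphi_n^{(1)}=0$, and the zero vector trivially lies in $\ker(x)$, which gives (i) $\Rightarrow$ (iii). The only slightly less trivial direction is (iii) $\Rightarrow$ (i): assuming $\varphi_n^{(2)}\in\ker(x)$, I would use the computation already carried out in the text after Proposition \ref{prop2}, namely $x\,\varphi_n^{(2)} = x\,x^\dagger\varphi_n^{(1)} = N_1\varphi_n^{(1)} = \nu_n\varphi_n^{(1)}$. Since $x\varphi_n^{(2)}=0$ by assumption, this forces $\nu_n\varphi_n^{(1)}=0$; as $\varphi_n^{(1)}$ is a normalized eigenvector it is nonzero, so $\nu_n=0$, and then $N_1\varphi_n^{(1)}=0$, which is (ii), hence (i). Assembling (i) $\Leftrightarrow$ (ii), (i) $\Rightarrow$ (iii) and (iii) $\Rightarrow$ (i) closes the loop.

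I do not anticipate a genuine obstacle here: every step is a one-line manipulation already rehearsed earlier in the excerpt. The only point requiring a word of care is domain bookkeeping, since $x$, $x^\dagger$ and $N_1$ are allowed to be unbounded; but all the vectors involved ($\varphi_n^{(1)}$, $\varphi_n^{(2)}=x^\dagger\varphi_n^{(1)}$, $x\varphi_n^{(2)}=N_1\varphi_n^{(1)}$) are implicitly in the relevant domains by the standing hypothesis that $\varphi_n^{(1)}$ is an eigenvector of $N_1$, so the formal computations are legitimate and no extra argument is needed.
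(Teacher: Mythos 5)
Your proof is correct and follows essentially the same route as the paper: the paper writes out only the implication $\varphi_n^{(2)}\in\ker(x)\Rightarrow\varphi_n^{(1)}\in\ker(x^\dagger)$ via $0=x\varphi_n^{(2)}=N_1\varphi_n^{(1)}$ and then falls back on the equivalence $\ker(N_1)=\ker(x^\dagger)$ established in Section II through the norm identity $\left<\Psi,N_1\Psi\right>=\|x^\dagger\Psi\|^2$ --- exactly the two ingredients you use. The only cosmetic remark is that your detour through the eigenvalue $\nu_n$ in the step (iii)$\Rightarrow$(i) is unnecessary (and, in the Section III setting, the relation $N_1\varphi_n^{(1)}=\nu_n\varphi_n^{(1)}$ is itself only justified later from $[\Theta_1,N_1]=0$ and the multiplicity-one hypothesis), since $0=x\varphi_n^{(2)}=x\,x^\dagger\varphi_n^{(1)}=N_1\varphi_n^{(1)}$ already gives (ii) directly.
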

\begin{proof}
We only prove here that if $\varphi_n^{(2)}\in\ker(x)$ then
$\varphi_n^{(1)}\in\ker(x^\dagger)$. Indeed our assumption implies
that $0=x\varphi_n^{(2)}=x\,x^\dagger\varphi_n^{(1)}$, so that
$\varphi_n^{(1)}\in\ker(N_1)$ which in turns implies, using the first statement of this Lemma, that $\varphi_n^{(1)}\in\ker(x^\dagger)$.

\end{proof}

From the definition of $\Theta_2^{(\alpha)}$ it is clear that $x$ is
an intertwining operator, since $x\Theta_2^{(\alpha)}=\Theta_1\,x$.
What is not evident is whether $x\Theta_2^{(\beta)}=\Theta_1\,x$ is
also true. We have already considered this problem in the previous section. It is not hard to see that the answer is affirmative also in the present situation.
This is a consequence of the fact that
$[x^\dagger\,\Theta_1\,x,N_2]=[x^\dagger\,\Theta_1\,x,N_2^{-1}]=0$,
which can be proved easily. A detailed analysis produces, other than
these, the following commutation rules
$$
[\Theta_2^{(j)},N_2]=[\Theta_2^{(j)},N_2^{-1}]=[(\Theta_2^{(j)})^\dagger,N_2]=[(\Theta_2^{(j)})^\dagger,N_2^{-1}]=0,
$$
as well as $$ [\Theta_1^\dagger,N_1]=[x^\dagger\Theta_1^\dagger
x,N_2]=[x^\dagger\Theta_1^\dagger
x,N_2^{-1}]=[x(\Theta_2^{(j)})^\dagger x^\dagger,N_1]=0, \quad
j=\alpha,\beta,
$$ and the following intertwining relations, all arising from our
assumptions and from (\ref{31}) and (\ref{32}):
\be
x\Theta_2^{(j)}=\Theta_1\,x,\quad \Theta_2^{(j)}x^\dagger=x^\dagger \Theta_1,\quad
j=\alpha,\beta.
\label{34}
\en
Of course, the second equality in (\ref{34}) is just the adjoint of the first one only if $\Theta_1$ and $\Theta_2^{(j)}$ are self-adjoint, otherwise they are different.
An interesting consequence is deduced if $\Theta_2=\Theta_1^\dagger$, which is important, as we will discuss in the
following, in the context of pseudo-hermitian quantum mechanics (PHQM), \cite{mosta,mosta2,mosta3}.  In this case $x$ is not only an IO but
it also commutes with $\Theta_1+\Theta_1^\dagger$, as well as $x^\dagger$ does. Hence $N_1$, $N_2$ and $\Theta_1+\Theta_2$ are
three self-adjoint operators such that $[N_1,\Theta_1+\Theta_2]=[N_2,\Theta_1+\Theta_2]=0$, but, in general, $[N_1,N_2]\neq0$. So they are not expected to admit a set of common eigenvectors.

An evident difference between $N_1$ and $N_2$ is that, while $N_2$
is strictly positive by assumption, $N_1$ needs not to be
invertible. On the other hand our original assumption
$[N_1,\Theta_1]=0$ is reflected by $[N_2,\Theta_2]$, which is
also zero. However, analogously to what we observed in the previous
section, if $x$ is such that $[x,x^\dagger]\geq0$ (in the sense of
the operators), then $N_1=[x,x^\dagger]+N_2$ is also strictly positive so that it is invertible. If this is the case, the
commutation rules listed before can be enriched by other rules involving
$N_1^{-1}$, which will play no role here, and therefore will not be
considered. More interesting is the following

\begin{lemma}\label{lemma3}
If $N_2^{-1}$ exists, $[\Theta_1,N_1]=0$ and if (\ref{31}) holds,
then $\Theta_2=\Theta_2^{\dagger}$ if and only if
$\Theta_1=\Theta_1^{\dagger}$.
\end{lemma}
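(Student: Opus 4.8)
The plan is to reduce both implications to the single operator identity $x^{\dagger}(\Theta_1-\Theta_1^{\dagger})\,x=0$ and then to propagate the anti-self-adjoint part $\Delta:=\Theta_1-\Theta_1^{\dagger}$ through $x$ by means of the commutation rules already displayed just before the lemma.

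First I would write $\Theta_2$ in its $\beta$-form, $\Theta_2=N_2^{-1}(x^{\dagger}\Theta_1 x)$, which is legitimate since $N_2^{-1}$ exists and which coincides with $\Theta_2^{(\alpha)}$ when $x^{-1}$ exists. Because $N_2=N_2^{\dagger}>0$ one has $N_2^{-1}=(N_2^{-1})^{\dagger}$, so taking adjoints gives $\Theta_2^{\dagger}=x^{\dagger}\Theta_1^{\dagger}x\,N_2^{-1}$, and the relation $[x^{\dagger}\Theta_1^{\dagger}x,N_2^{-1}]=0$ from the list above lets me move $N_2^{-1}$ to the left:
\be
\Theta_2^{\dagger}=N_2^{-1}\,x^{\dagger}\,\Theta_1^{\dagger}\,x .
\en
From this the implication $\Theta_1=\Theta_1^{\dagger}\Rightarrow\Theta_2=\Theta_2^{\dagger}$ is immediate. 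For the converse, $\Theta_2=\Theta_2^{\dagger}$ forces $N_2^{-1}x^{\dagger}\Delta x=0$, hence $x^{\dagger}\Delta x=0$ after multiplying on the left by $N_2$; here $\Delta^{\dagger}=-\Delta$. Multiplying $x^{\dagger}\Delta x=0$ on the left by $x$ gives $N_1\Delta x=0$; since $[\Theta_1,N_1]=0$ by hypothesis, taking adjoints (using $N_1=N_1^{\dagger}$) gives $[\Theta_1^{\dagger},N_1]=0$ as well, so $[\Delta,N_1]=0$ and therefore $\Delta N_1 x=\Delta\,x\,x^{\dagger}x=\Delta\,x\,N_2=0$; invertibility of $N_2$ then forces $\Delta x=0$, and adjoining, $x^{\dagger}\Delta=0$, i.e. $\mathrm{Ran}(\Delta)\subseteq\ker(x^{\dagger})$.

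It remains to conclude $\Delta=0$. If $\ker(x^{\dagger})=\{0\}$ — in particular for $\Theta_2^{(\alpha)}$, where $x$ is invertible, and more generally whenever $[x,x^{\dagger}]\ge 0$ so that $N_1$ too is invertible — this is immediate from $x^{\dagger}\Delta=0$, and the lemma follows. In the general $\beta$ case I would instead invoke (\ref{31}): by $[\Theta_1,N_1]=0$ and the multiplicity-one assumption of this section each $\varphi_n^{(1)}$ is simultaneously an eigenvector of $N_1$, say $N_1\varphi_n^{(1)}=\nu_n\varphi_n^{(1)}$ with $\nu_n\ge0$, so for $\nu_n\ne0$ one has $\varphi_n^{(1)}=\nu_n^{-1}x\varphi_n^{(2)}\in\mathrm{Ran}(x)$ and hence $\Delta\varphi_n^{(1)}=0$; thus $\Delta$ vanishes on $\overline{\mathrm{Ran}(x)}$ and, being anti-self-adjoint with range in $\ker(x^{\dagger})=\overline{\mathrm{Ran}(x)}^{\perp}$, it is entirely carried by $\ker(x^{\dagger})=\ker(N_1)$, on which (it is $\Theta_1$-invariant, $\Theta_1^{\dagger}$-invariant, and) it acts with eigenvalues $\epsilon_n-\overline{\epsilon_n}$. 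I expect this last point — forcing $\Delta$ to annihilate also the part of $\F_1$ sitting inside $\ker(x^{\dagger})$, the subspace that $\Theta_2$ simply does not see — to be the real obstacle: it needs either the triviality of $\ker(x^{\dagger})$ as above, or the additional input (automatic in the pseudo-hermitian applications the lemma is aimed at) that the $\epsilon_n$ attached to vectors of $\ker(x^{\dagger})$ are real, after which completeness of $\F_1$ closes the argument. Everything preceding this step is routine manipulation of the commutation relations already recorded above.
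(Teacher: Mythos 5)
Your forward implication and your treatment of the invertible case are correct and essentially reproduce the paper's argument: the paper proves the $\alpha$ case by rewriting $\Theta_2^{(\alpha)}=(\Theta_2^{(\alpha)})^\dagger$ as $\Theta_1=N_1\Theta_1^\dagger N_1^{-1}$ and using $[N_1,\Theta_1^\dagger]=0$, which is the same cancellation you perform, and your reduction of the $\beta$ case to $x^\dagger\Theta_1x=x^\dagger\Theta_1^\dagger x$, i.e.\ $x^\dagger\Delta x=0$, is exactly the paper's first step there. Your subsequent derivation $x^\dagger\Delta x=0\Rightarrow\Delta x=0\Rightarrow x^\dagger\Delta=0$ via $[\Delta,N_1]=0$ and the invertibility of $N_2$ is correct and is actually sharper than what the paper writes down.

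The doubt you raise at the end is not a defect of your proof but a genuine gap in the lemma itself. The paper closes the $\beta$ case by invoking (\ref{34}) to rewrite $x^\dagger\Theta_1x=x^\dagger\Theta_1^\dagger x$ as $\Theta_2^{(\alpha)}N_2=(\Theta_2^{(\alpha)})^\dagger N_2$ and then appealing to the already-proved $\alpha$ equivalence --- but $\Theta_2^{(\alpha)}=x^{-1}\Theta_1x$ only exists when $x$ is invertible, so the paper silently reinstates the hypothesis it claims to have dropped. Your observation that $\Delta$ may survive on $\ker(x^\dagger)=\ker(N_1)$, the subspace $\Theta_2^{(\beta)}$ never sees, is precisely where the statement fails. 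A concrete counterexample: let $\{e_n\}_{n\geq0}$ be an o.n.\ basis of $\Hil$, $x$ the unilateral shift $xe_n=e_{n+1}$, so $N_2=x^\dagger x=\1$ is invertible while $N_1=xx^\dagger=\1-P_0$, with $P_n$ the orthogonal projection onto $e_n$. Take $\Theta_1=i\,P_0+\sum_{n\geq1}n\,P_n$; then $[\Theta_1,N_1]=0$, (\ref{31}) holds with the nondegenerate eigenvalues $i,1,2,\dots$, yet $\Theta_2^{(\beta)}=x^\dagger\Theta_1x=\sum_{m\geq0}(m+1)P_m$ is self-adjoint while $\Theta_1$ is not. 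So the lemma is true as stated only for $\Theta_2^{(\alpha)}$ (or whenever $\ker(x^\dagger)=\{0\}$, e.g.\ when $[x,x^\dagger]\geq0$), and in the general $\beta$ case one must add exactly the kind of hypothesis you propose, namely reality of the eigenvalues attached to the eigenvectors lying in $\ker(x^\dagger)$, together with completeness of $\F_1$.
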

\begin{proof}
We give two different proofs for $\Theta_2^{(\alpha)}$ and $\Theta_2^{(\beta)}$.

Let us first suppose that
$\Theta_2^{(\alpha)}=(\Theta_2^{(\alpha)})^{\dagger}$. This means
that we are working under the assumption that $x^{-1}$ does exist.
Hence $x^{-1}\Theta_1\,x=x^\dagger\Theta_1^\dagger(x^{-1})^\dagger$,
which is equivalent to $\Theta_1=N_1\Theta_1^\dagger N_1^{-1}$
which, since $[N_1,\Theta_1^\dagger]=0$, implies that
$\Theta_2^{(\alpha)}=(\Theta_2^{(\alpha)})^{\dagger}$ if and only if
$\Theta_1=\Theta_1^\dagger$.

Let us now suppose  that
$\Theta_2^{(\beta)}=(\Theta_2^{(\beta)})^{\dagger}$ (we are no
longer requiring  $x$ to be invertible). This is equivalent to
$N_2^{-1}(x^\dagger\Theta_1 x)=(x^\dagger\Theta_1^\dagger
x)N_2^{-1}$, which, since $[x^\dagger\,\Theta_1\,x,N_2^{-1}]=0$, is
equivalent to $x^\dagger\Theta_1 x=x^\dagger\Theta_1^\dagger x$.
Now, using (\ref{34}), this can be rewritten as
$\Theta_2^{(\alpha)}x^\dagger
x=(\Theta_2^{(\alpha)})^{\dagger}x^\dagger x$ which can be
multiplied from the right by $N_2^{-1}$, giving back
$\Theta_2^{(\alpha)}=(\Theta_2^{(\alpha)})^{\dagger}$. Hence, $\Theta_2^{(\beta)}=(\Theta_2^{(\beta)})^{\dagger}$ if and only if $\Theta_2^{(\alpha)}=(\Theta_2^{(\alpha)})^{\dagger}$  which, in turns, is equivalent to $\Theta_1=\Theta_1^\dagger$.
\end{proof}

\begin{prop}
Under the  assumptions of Lemma \ref{lemma3}, let us suppose
that, for a fixed $k\geq0$, we have $\varphi_k^{(1)}\notin\ker(N_1)$.
Then $\varphi_k^{(1)}$ is also eigenstate of $N_1$ with a strictly
positive eigenvalue $\nu_k$. Moreover $\varphi_k^{(2)}$ is a
non-zero eigenstate of $N_2$ with eigenvalue $\nu_k$.

Furthermore, if $\nu_k$ has multiplicity one, $m(\nu_k)=1$,
then $\varphi_k^{(1)}$ is also eigenstate of $\Theta_1^\dagger$ with
eigenvalue $\overline{\epsilon_k}$, and $\varphi_k^{(2)}$ is an
 eigenstate of $\Theta_2^\dagger$ with eigenvalue $\overline{\epsilon_k}$.

Finally, if $m(\nu_k)=1$ for all $k\geq 0$, then $\F_1$ and
$\F_2$ are orthogonal systems in $\Hil$. Moreover:

$\bullet$ if $\F_1$ is complete in $\Hil$, then
$[\Theta_1,\Theta_1^\dagger]=0$;

$\bullet$ if $\F_2$ is complete in $\Hil$, then
$[\Theta_2,\Theta_2^\dagger]=0$.

\end{prop}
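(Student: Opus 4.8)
The plan is to chase the eigenvalue equations through $x$ and $x^\dagger$ exactly as in Proposition \ref{prop2}, then upgrade from ``eigenvector'' to ``common eigenvector'' using the multiplicity-one hypothesis, and finally use completeness plus spectral decomposition to get the commutator statements. First I would establish the first paragraph: since $\varphi_k^{(1)}\notin\ker(N_1)$, Lemma \ref{lemma1} tells us $\varphi_k^{(1)}\notin\ker(x^\dagger)$, hence $\varphi_k^{(2)}=x^\dagger\varphi_k^{(1)}\neq0$; because $[N_1,\Theta_1]=0$ and $\epsilon_k$ has multiplicity one, the one-dimensional eigenspace of $\Theta_1$ spanned by $\varphi_k^{(1)}$ is $N_1$-invariant, so $\varphi_k^{(1)}$ is an eigenvector of $N_1$, say $N_1\varphi_k^{(1)}=\nu_k\varphi_k^{(1)}$, and $\nu_k=\|\varphi_k^{(2)}\|^2/\|\varphi_k^{(1)}\|^2>0$ by the computation already displayed in the text after Proposition 1. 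Then $N_2\varphi_k^{(2)}=x^\dagger x\,x^\dagger\varphi_k^{(1)}=x^\dagger N_1\varphi_k^{(1)}=\nu_k\varphi_k^{(2)}$, which is the second sentence.

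Next, for the ``furthermore'' part, assume $m(\nu_k)=1$. By Lemma \ref{lemma3} we have $[\Theta_1^\dagger,N_1]=0$ (this is one of the commutation rules already listed before Lemma \ref{lemma3}), so the one-dimensional $\nu_k$-eigenspace of $N_1$ is invariant under $\Theta_1^\dagger$; hence $\varphi_k^{(1)}$ is an eigenvector of $\Theta_1^\dagger$, and taking the inner product with $\varphi_k^{(1)}$ in $\langle\Theta_1^\dagger\varphi_k^{(1)},\varphi_k^{(1)}\rangle=\overline{\langle\varphi_k^{(1)},\Theta_1\varphi_k^{(1)}\rangle}=\overline{\epsilon_k}\|\varphi_k^{(1)}\|^2$ identifies the eigenvalue as $\overline{\epsilon_k}$. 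The same argument applied to $N_2$ and $(\Theta_2)^\dagger$ — using $[(\Theta_2^{(j)})^\dagger,N_2]=0$ from the listed rules, and $m(\nu_k)=1$ for $N_2$, which follows from $m(\nu_k)=1$ for $N_1$ because $x$ and $x^\dagger$ set up a bijection between the corresponding eigenspaces (this is essentially \eqref{28}) — gives $\Theta_2^\dagger\varphi_k^{(2)}=\overline{\epsilon_k}\varphi_k^{(2)}$.

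For the last paragraph, suppose $m(\nu_k)=1$ for all $k$. Orthogonality of $\F_1$: if $n\neq m$ and $\epsilon_n\neq\epsilon_m$ the usual argument applies, but when eigenvalues of $\Theta_1$ coincide we instead use that $\varphi_n^{(1)},\varphi_m^{(1)}$ are eigenvectors of the \emph{self-adjoint} operator $N_1$; if $\nu_n\neq\nu_m$ orthogonality is immediate, and $\nu_n=\nu_m$ is impossible for $n\neq m$ since $m(\nu_k)=1$ — so $\F_1$ is an orthogonal system, and likewise $\F_2$ via $N_2$. Now if $\F_1$ is complete, then $\{\varphi_n^{(1)}/\|\varphi_n^{(1)}\|\}$ is an orthonormal basis consisting of simultaneous eigenvectors of $\Theta_1$ and $\Theta_1^\dagger$; on each basis vector $\Theta_1\Theta_1^\dagger$ and $\Theta_1^\dagger\Theta_1$ both act as $|\epsilon_n|^2$, so they agree on a total set and $[\Theta_1,\Theta_1^\dagger]=0$. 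The identical argument with $\F_2$, $\Theta_2$, $\Theta_2^\dagger$ gives $[\Theta_2,\Theta_2^\dagger]=0$, with the caveat that one should use the normalized vectors $\hat\varphi_n^{(2)}$ and note that completeness of $\F_2$ is equivalent to invertibility of $N_2$, which we have assumed.

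The main obstacle I anticipate is the passage from ``$\varphi_k^{(1)}$ is an eigenvector of $\Theta_1$ and $\epsilon_k$ (or $\nu_k$) has multiplicity one'' to ``$\varphi_k^{(1)}$ is a common eigenvector of $\Theta_1^\dagger$ (resp. $N_1$, $\Theta_1$)'': this invariant-subspace argument is clean for bounded operators but requires a little care about domains when $\Theta_1$ is unbounded, and one must also make sure the notion of multiplicity being used ($m(\nu_k)$ for $N_1$ versus $N_2$) is transported correctly across $x,x^\dagger$ — the bijectivity of $x^\dagger$ restricted to $\ker(N_1-\nu_k)$ and of $x$ restricted to $\ker(N_2-\nu_k)$, with $\nu_k>0$, is the key point and follows from \eqref{28}. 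Everything else is bookkeeping with the commutation relations already tabulated in the text.
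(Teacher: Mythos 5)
Your proposal is correct and follows essentially the same route as the paper's own proof: multiplicity one of $\epsilon_k$ plus $[\Theta_1,N_1]=0$ forces $\varphi_k^{(1)}$ to be an $N_1$-eigenvector with $\nu_k=\|x^\dagger\varphi_k^{(1)}\|^2/\|\varphi_k^{(1)}\|^2>0$, then $m(\nu_k)=1$ plus $[N_1,\Theta_1^\dagger]=0$ gives the $\Theta_1^\dagger$ eigenvalue equation, and orthogonality plus completeness yields the normality statements via $\Theta_1\Theta_1^\dagger\varphi_k^{(1)}=\Theta_1^\dagger\Theta_1\varphi_k^{(1)}=|\epsilon_k|^2\varphi_k^{(1)}$. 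Your remarks on transporting the multiplicity of $\nu_k$ from $N_1$ to $N_2$ via $x,x^\dagger$ are a welcome extra precision that the paper leaves implicit.
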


\begin{proof}
Since $[\Theta_1,N_1]=0$, and since $m(\epsilon_k)=1$, it follows
that $N_1\varphi_k^{(1)}$ must be proportional to $\varphi_k^{(1)}$
itself. Let $\nu_k$ be this proportionality constant. Hence
$N_1\varphi_k^{(1)}=\nu_k\varphi_k^{(1)}$ and
$\nu_k=\frac{\|x^\dagger\varphi_k^{(1)}\|^2}{\|\varphi_k^{(1)}\|^2}$,
which is strictly positive since
$\varphi_k^{(1)}\notin\ker(x^\dagger)$, see Lemma \ref{lemma1}. This
same Lemma also implies that $\varphi_k^{(2)}\neq 0$, and we have
$N_2\varphi_k^{(2)}=x^\dagger\,x\,x^\dagger\varphi_k^{(1)}=x^\dagger
N_1\varphi_k^{(1)}=\nu_k\varphi_k^{(2)}$.

Now, we notice that $\varphi_k^{(1)}\notin \ker(\Theta_1^\dagger)$
and that $\varphi_k^{(2)}\notin \ker(\Theta_2^\dagger)$. Indeed we
have
$$
\left<\varphi_k^{(1)},\Theta_1^\dagger\varphi_k^{(1)}\right>=\left<\Theta_1\varphi_k^{(1)},\varphi_k^{(1)}\right>=
\overline{\epsilon_k}\|\varphi_k^{(1)}\|^2,
$$
and analogously
$\left<\varphi_k^{(2)},\Theta_2^\dagger\varphi_k^{(2)}\right>=
\overline{\epsilon_k}\|\varphi_k^{(2)}\|^2$, which are both
different from zero.

If we now assume that $m(\nu_k)=1$, since
$[N_1,\Theta_1^\dagger]=0$, we conclude that
$\Theta_1^\dagger\varphi_k^{(1)}$ is proportional to
$\varphi_k^{(1)}$ itself, and the proportionality constant is easily
found to be $\overline{\epsilon_k}$:
$\Theta_1^\dagger\varphi_k^{(1)}=\overline{\epsilon_k}\varphi_k^{(1)}$.
In similar way we also deduce that
$\Theta_2^\dagger\varphi_k^{(2)}=\overline{\epsilon_k}\varphi_k^{(2)}$.

Finally, let us assume that $m(\nu_k)=1$ for all $k\geq0$. This
implies that, taking $j=1,2$, since different $\varphi_k^{(j)}$'s
are eigenvectors of self-adjoint operators $N_j$ corresponding to
different eigenvalues, they must be orthogonal:
$$
\left<\varphi_k^{(j)},\varphi_n^{(j)}\right>=0
$$
if $k\neq n$, for $j=1,2$. Moreover, if for instance $\F_1$ is
complete in $\Hil$, hence it is an o.n. basis. Therefore our last claim
easily follows from the fact that
$\Theta_1\Theta_1^\dagger\varphi_k^{(1)}=
\Theta_1^\dagger\Theta_1\varphi_k^{(1)}=|\epsilon_k|^2\varphi_k^{(1)}$,
for all $k\geq0$.

\end{proof}

\subsection{The role of Riesz bases}

Since $\Theta_k$, $k=1,2$, are not, in general, self-adjoint
operators, the sets of their eigenstates $\F_1$ and $\F_2$ are not
orthonormal, in general. This is one of the reasons why we are now
considering the role of Riesz bases in the present context. The
second reason, as already stated in the Introduction, is that in a series of
recent papers, \cite{bag5,bag6,bag7}, Riesz bases have already
appeared in analogous problems, and they have shown to be quite
relevant.

Let us now assume that the set $\F_1$ of eigenstates of
$\Theta_1$ is a Riesz basis for $\Hil$. This means that a bounded
operator $T$ exists, with bounded inverse $T^{-1}$, and an o.n.
basis $\E=\{e_n\in\Hil,\,n\geq0\}$, such that $\varphi_n^{(1)}=Te_n$
for all $n$. Equivalently, \cite{you,chri}, we can say that the
vectors of $\F_1$ are linearly independent and two constants exist,
$0<A\leq B<\infty$, such that, for all $f\in\Hil$,
$$
A\|f\|^2\leq \sum_{n\geq 0}\left|\left<\varphi_n^{(1)},f\right>\right|^2\leq B \|f\|^2.
$$
Then a bounded operator (the {\em frame operator}) $S_1:=\sum_{n\geq
0}|\varphi_n^{(1)}\left>\right<\varphi_n^{(1)}|=T\,T^\dagger$
exists, with bounded inverse, and the set
$\tilde\F_1=\{\tilde\varphi_n^{(1)}=S_1^{-1}\varphi_n^{(1)}\}$ is
biorthogonal to $\F_1$:
$\left<\varphi_n^{(1)},\tilde\varphi_k^{(1)}\right>=\delta_{n,k}$,
for all $n, k\geq0$. Moreover $\tilde\F_1$ is a Riesz basis by
itself, since $\tilde\varphi_n^{(1)}=\tilde Te_n$ for all $n$, with
$\tilde T=S_1^{-1} T$. Indeed, $\tilde T$ is bounded with bounded
inverse. Also, the following resolutions of the identity can be deduced: \be
\sum_{n\geq
0}|\tilde\varphi_n^{(1)}\left>\right<\varphi_n^{(1)}|=\sum_{n\geq
0}|\varphi_n^{(1)}\left>\right<\tilde\varphi_n^{(1)}|=\1
\label{35}\en Let now $\Theta_1$, $\F_1$ and $x$ be as in the first
part of Section III. It is interesting to analyze the nature of
$\F_2$. The first easy result is the following
\begin{lemma}\label{lemma2}
Let $\F_1$ be a Riesz basis. Then the following are equivalent: (a)
$x=S_1^{-1}$; (b) the set
$\F_2=\{\varphi_n^{(2)}=x^\dagger\varphi_n^{(1)},\, n\geq0\}$ is a
Riesz basis biorthogonal to $\F_1$.
\end{lemma}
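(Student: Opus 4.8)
The plan is to establish the two implications separately, leaning on two facts about a Riesz basis that have been recalled above: the frame operator $S_1 = TT^\dagger$ is self-adjoint, positive and boundedly invertible, and the biorthogonal (dual) system of a complete set is uniquely determined. For the implication (a) $\Rightarrow$ (b) I would start from $x = S_1^{-1}$; since $S_1 = S_1^\dagger$ this gives $x^\dagger = S_1^{-1}$ as well, so that $\varphi_n^{(2)} = x^\dagger\varphi_n^{(1)} = S_1^{-1}\varphi_n^{(1)} = \tilde\varphi_n^{(1)}$ for every $n$, i.e. $\F_2 = \tilde\F_1$. Since $\tilde\F_1$ has already been shown to be a Riesz basis biorthogonal to $\F_1$, (b) follows at once.

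For the converse (b) $\Rightarrow$ (a), the point is that I only need the biorthogonality of $\F_2$ to $\F_1$, not that $\F_2$ is itself a Riesz basis. Fix $k\geq 0$. From $\left<\varphi_n^{(1)},\varphi_k^{(2)}\right> = \delta_{n,k} = \left<\varphi_n^{(1)},\tilde\varphi_k^{(1)}\right>$, valid for all $n$, the vector $\varphi_k^{(2)} - \tilde\varphi_k^{(1)}$ is orthogonal to every element of $\F_1$; since $\F_1$ is a Riesz basis it is complete in $\Hil$, hence $\varphi_k^{(2)} = \tilde\varphi_k^{(1)} = S_1^{-1}\varphi_k^{(1)}$. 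Unwinding the definition of $\varphi_k^{(2)}$, this reads $x^\dagger\varphi_k^{(1)} = S_1^{-1}\varphi_k^{(1)}$ for all $k\geq 0$. It then remains to upgrade this identity, which so far holds only on the vectors of $\F_1$, to the operator identity $x^\dagger = S_1^{-1}$; taking adjoints and using $S_1 = S_1^\dagger$ once more gives $x = S_1^{-1}$, which is (a).

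The main obstacle is exactly this last upgrade, and how much care it needs depends on the regularity assumed for $x$. If $x$, hence $x^\dagger$, is bounded, then two bounded operators that agree on the total set $\F_1$ coincide and there is nothing more to prove. If $x$ is merely closable, I would argue through the Riesz-basis expansion $f = \sum_{n\geq 0}\left<\tilde\varphi_n^{(1)},f\right>\varphi_n^{(1)}$, which holds for every $f\in\Hil$: the partial sums converge to $f$, their images under $x^\dagger$ are obtained termwise and equal $S_1^{-1}$ applied to the same partial sums, which converges to $S_1^{-1}f$ because $S_1^{-1}$ is bounded, and closedness of $x^\dagger$ then forces $f\in D(x^\dagger)$ with $x^\dagger f = S_1^{-1}f$. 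If one is content with the statement at the level of the two bases being equal, the first implication and the first half of the second already deliver that without this analytic step.
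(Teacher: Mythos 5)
Your proof is correct. The implication (a)$\Rightarrow$(b) is exactly the paper's: $S_1^{-1}$ is bounded and self-adjoint, so $\F_2=\tilde\F_1$, which is already known to be a Riesz basis biorthogonal to $\F_1$. For the converse you take a genuinely different, though closely related, route. The paper inserts $\varphi_n^{(1)}=Te_n$ into the biorthogonality relation to get $\left<T^\dagger x\,Te_n,e_k\right>=\delta_{n,k}$ for all $n,k$, concludes $T^\dagger x\,T=\1$ from the matrix elements in the o.n.\ basis $\E$, and inverts to obtain $x=(TT^\dagger)^{-1}=S_1^{-1}$. You instead invoke the uniqueness of the biorthogonal system of a complete set to identify $\varphi_k^{(2)}=\tilde\varphi_k^{(1)}$ vector by vector, and then upgrade the resulting identity $x^\dagger\varphi_k^{(1)}=S_1^{-1}\varphi_k^{(1)}$ to an operator identity by density. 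The two arguments carry the same content --- your observation that only biorthogonality of $\F_2$ to $\F_1$ is needed (not that $\F_2$ be a Riesz basis) is equally implicit in the paper's computation, which never uses more than the relation $\left<\varphi_n^{(1)},\varphi_k^{(2)}\right>=\delta_{n,k}$. What your version buys is an explicit treatment of the step the paper glosses over: passing from agreement of $x^\dagger$ and $S_1^{-1}$ on the total set $\F_1$ (equivalently, from knowledge of the matrix elements of $T^\dagger xT$) to equality of operators, which is immediate for bounded $x$ and requires the closedness argument you give otherwise; the paper's deduction ``$\left<T^\dagger xTe_n,e_k\right>=\delta_{n,k}$ implies $T^\dagger xT=\1$'' rests on the same implicit boundedness. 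No gap in either direction.
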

\begin{proof}
The proof that (a) implies (b) follows from our previous discussion,
noticing that in this case $\F_2=\tilde\F_1$.

The converse implication can be proved as follows: first we notice
that
$$
\delta_{n,k}=\left<\varphi_n^{(1)},\varphi_k^{(2)}\right>=\left<Te_n,x^\dagger Te_k\right>=\left<T^\dagger x Te_n,e_k\right>,
$$
for all $n, k$, which implies that $T^\dagger x T=\1$. Then,
recalling that $T$ is invertible, we get
$x=(TT^\dagger)^{-1}=S_1^{-1}$.

\end{proof}

By means of the resolutions (\ref{35}) we can easily deduce that, if
$\F_2=\tilde\F_1$, \be \Theta_1=\sum_{n\geq
0}\epsilon_n|\varphi_n^{(1)}\left>\right<\varphi_n^{(2)}|,
\quad\mbox{ and } \Theta_2=\sum_{n\geq
0}\epsilon_n|\varphi_n^{(2)}\left>\right<\varphi_n^{(1)}|,
\label{36}\en from which it is possible to deduce that
$\Theta_1=\Theta_2^\dagger$ if and only if $\epsilon_n$ is real for
all $n$.

Of course, having two biorthogonal Riesz bases $(\F_1,\F_2)$   of
eigenstates of $(\Theta_1,\Theta_2)$ looks quite interesting and is a
natural extension of what happens for self-adjoint operators. However, the
following proposition can be seen as a sort of no-go result. Indeed
it states that, under the hypotheses we are considering here,
$\Theta_2$ coincides with $\Theta_1$ and $\F_2$ coincides with $\F_1$. In
other words, we are just only apparently introducing new vectors and
a new operator.

\begin{prop}\label{prop1}
Let $\F_1$ be a set of eigenstates of $\Theta_1$ which is also a
Riesz basis, and let us suppose that $x=S_1^{-1}$ and that
$[\Theta_1,N_1]=0$. Hence $[\Theta_1,x]=0$ and $\Theta_2=\Theta_1$.
Moreover, for all $n\geq 0$, $\varphi_n^{(2)}$ is proportional to
$\varphi_n^{(1)}$.
\end{prop}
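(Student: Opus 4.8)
The plan is to exploit the fact that the hypothesis $x=S_1^{-1}$ is extremely rigid and does almost all the work. Since $S_1=T\,T^\dagger$ is self-adjoint, positive and boundedly invertible, the same is true of $x=S_1^{-1}$; hence $x^\dagger=x>0$, $x$ is bounded with bounded inverse, and $N_1=x\,x^\dagger=x^2=x^\dagger\,x=N_2$. In particular $x$ is the (unique) positive square root of $N_1$, $x=N_1^{1/2}$, and $N_2^{-1}=x^{-2}$ exists, so the assumptions of Lemma \ref{lemma3} are in force; moreover $\ker(N_1)=\{0\}$, so by Lemma \ref{lemma1} no $\varphi_n^{(1)}$ lies in $\ker(x^\dagger)$ and every $\varphi_n^{(2)}$ is nonzero.

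The first step is to prove $[\Theta_1,x]=0$. This is the familiar fact that an operator commuting with a bounded positive operator commutes with its positive square root: on the (compact) spectrum of $N_1$ the function $\lambda\mapsto\sqrt{\lambda}$ is a uniform limit of polynomials $p_k$, so $x=N_1^{1/2}=\lim_k p_k(N_1)$ in operator norm; since $[\Theta_1,N_1]=0$ and $N_1$ is bounded, each $p_k(N_1)$ leaves $D(\Theta_1)$ invariant and commutes with $\Theta_1$ there, and passing to the norm limit — using that $\Theta_1$ is closed, so that convergence of $\Theta_1\,p_k(N_1)f=p_k(N_1)\,\Theta_1 f$ forces $xf\in D(\Theta_1)$ with $\Theta_1 xf=x\,\Theta_1 f$ — gives $[\Theta_1,x]=0$, and likewise $[\Theta_1,x^{-1}]=0$. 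I expect this functional-calculus step, together with the mild care it needs because $\Theta_1$ is in general unbounded, to be the only real obstacle; all the rest is algebra.

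With $[\Theta_1,x]=0$ in hand, the equality $\Theta_2=\Theta_1$ is immediate: since $x^{-1}$ exists we may use $\Theta_2=\Theta_2^{(\alpha)}=x^{-1}\Theta_1\,x=x^{-1}x\,\Theta_1=\Theta_1$ (and the $\beta$-form returns the same, as it must, via $N_2^{-1}(x^\dagger\Theta_1 x)=x^{-2}(x\,\Theta_1\,x)=x^{-2}\Theta_1\,x^2=\Theta_1$). For the eigenvectors I would note that $[\Theta_1,N_1]=0$ and $m(\epsilon_n)=1$ force $N_1\varphi_n^{(1)}$ to lie in the one-dimensional $\epsilon_n$-eigenspace of $\Theta_1$, which is spanned by $\varphi_n^{(1)}$; hence $N_1\varphi_n^{(1)}=\nu_n\varphi_n^{(1)}$ with $\nu_n>0$, so that $x=N_1^{1/2}$ acts on $\varphi_n^{(1)}$ as multiplication by $\sqrt{\nu_n}$, and therefore $\varphi_n^{(2)}=x^\dagger\varphi_n^{(1)}=x\,\varphi_n^{(1)}=\sqrt{\nu_n}\,\varphi_n^{(1)}$ is proportional to $\varphi_n^{(1)}$ with a nonzero constant. (Since here $\F_2$ coincides with $\tilde\F_1$, biorthogonality even fixes that constant to be $1/\|\varphi_n^{(1)}\|^2$.)
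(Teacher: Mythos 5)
Your proof is correct and follows essentially the same route as the paper: both arguments hinge on the observation that $x=S_1^{-1}$ is the unique positive square root of $N_1=S_1^{-2}$ and therefore commutes with $\Theta_1$ because $N_1$ does (the paper simply cites Reed--Simon for this, where you supply the polynomial-approximation argument), after which $\Theta_2=x^{-1}\Theta_1x=\Theta_1$ and the proportionality of $\varphi_n^{(2)}$ to $\varphi_n^{(1)}$ follow. You merely spell out the final steps (and the domain issues for unbounded $\Theta_1$) that the paper dismisses with ``our claims now easily follow.''
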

\begin{proof}
First we remark that, since $x=S_1^{-1}$, and since $S_1$ is
self-adjoint, then $N_1=N_2=S_1^{-2}$, which is clearly invertible.
Moreover $\Theta_2^{(\alpha)}$ and $\Theta_2^{(\beta)}$ coincide,
since $x^{-1}$ exists and is equal to $S_1$. Now, since $N_1$ is a
positive and bounded operator commuting with $\Theta_1$, it is
known, \cite{rs}, that there exists an unique positive operator, the
square root of $N_1$, which commutes with all the operators which
commute with $N_1$. Of course this positive square root is
$S_1^{-1}$ itself, and then it follows that $[\Theta_1,S_1^{-1}]=0$.
Our claims now easily follow.

\end{proof}

This result suggests that the assumptions contained in Lemma
\ref{lemma2} are too restrictive and should be weakened. This is exactly what we will do in
the rest of this section. We begin with the following proposition,
related to the structure of $\Theta_1$ in connection with its
pseudo-hermiticity. This result generalizes those contained in \cite{mosta3}.

\begin{prop}
If $\Theta_1$ admits a basis of eigenvectors which is a Riesz basis,
with real eigenvalues, then there exists an operator $T$,
bounded with bounded inverse, such that $\Theta_1$ is
$(T\,T^\dagger)^{-1}$-pseudo hermitian. Viceversa, if $\Theta_1$ is
$(T\,T^\dagger)^{-1}$-pseudo hermitian for some operator $T$,
bounded with bounded inverse, and if the operator $T^{-1}\Theta_1 T$ admits an o.n. basis of $\Hil$ as eigenstates, then  $\Theta_1$ admits a basis of eigenvectors which is a Riesz basis,
with real eigenvalues.
\end{prop}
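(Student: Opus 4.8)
The plan is to transport everything through the similarity attached to the Riesz basis and reduce both implications to one algebraic fact. Set $\theta_1:=T^{-1}\Theta_1 T$. The observation I would isolate first is that $\Theta_1$ is $(TT^\dagger)^{-1}$-pseudo hermitian if and only if $\theta_1=\theta_1^\dagger$: substituting $\Theta_1=T\theta_1 T^{-1}$ one gets $(TT^\dagger)^{-1}\Theta_1(TT^\dagger)=(T^\dagger)^{-1}\theta_1 T^\dagger$, while $\Theta_1^\dagger=(T^\dagger)^{-1}\theta_1^\dagger T^\dagger$, and since $T^\dagger$ is boundedly invertible these two operators agree exactly when $\theta_1$ is self-adjoint. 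With this in hand, both halves of the proposition become the statement that $\F_1$ being a Riesz basis of eigenvectors of $\Theta_1$ with real eigenvalues is equivalent to $\theta_1$ having an o.n. basis of eigenvectors with real eigenvalues.

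For the direct implication: if $\F_1=\{\varphi_n^{(1)}\}$ is a Riesz basis of eigenvectors of $\Theta_1$ with real $\epsilon_n$, choose $T$ bounded with bounded inverse and an o.n. basis $\E=\{e_n\}$ of $\Hil$ with $\varphi_n^{(1)}=Te_n$. Then $\theta_1 e_n=T^{-1}\Theta_1\varphi_n^{(1)}=\epsilon_n e_n$, so $\E$ is an o.n. basis of eigenvectors of $\theta_1$ with real eigenvalues; hence $\theta_1$ is self-adjoint, and the equivalence above gives that $\Theta_1$ is $(TT^\dagger)^{-1}$-pseudo hermitian. For the converse: assume $\Theta_1$ is $(TT^\dagger)^{-1}$-pseudo hermitian and $\theta_1=T^{-1}\Theta_1 T$ has an o.n. basis $\E=\{e_n\}$ of eigenvectors, $\theta_1 e_n=\epsilon_n e_n$; by the equivalence $\theta_1=\theta_1^\dagger$, so the $\epsilon_n$ are real; setting $\varphi_n^{(1)}:=Te_n$ one has $\Theta_1\varphi_n^{(1)}=\epsilon_n\varphi_n^{(1)}$, and $\F_1=\{\varphi_n^{(1)}\}$ is a Riesz basis, being the image of the o.n. basis $\E$ under the bounded, boundedly invertible $T$.

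Most of this is bookkeeping once the similarity is set up; the delicate point — and the main obstacle if $\Theta_1$ is allowed to be unbounded — is the rigorous two-way link between ``$\theta_1$ has an o.n. basis of eigenvectors with real eigenvalues'' and ``$\theta_1=\theta_1^\dagger$''. For it one must fix that $\theta_1$ is understood on the natural maximal domain $\{f\in\Hil:\sum_n|\epsilon_n|^2|\left<e_n,f\right>|^2<\infty\}$ on which its spectral expansion converges, and check this is consistent with $D(\theta_1)=\{f:Tf\in D(\Theta_1)\}$; under that convention it is just the spectral theorem for operators with pure point spectrum, and in the bounded case there is nothing to check. It is also worth recording that $(TT^\dagger)^{-1}$ is automatically bounded, positive and boundedly invertible, hence an admissible metric operator in the PHQM sense, which is exactly what makes this a genuine strengthening of the results in \cite{mosta3}.
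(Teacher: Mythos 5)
Your proposal is correct and follows essentially the same route as the paper: both arguments conjugate by the Riesz-basis operator $T$, observe that $(T\,T^\dagger)^{-1}$-pseudo-hermiticity of $\Theta_1$ is equivalent to self-adjointness of $T^{-1}\Theta_1 T$, and then identify self-adjointness of that operator with its having an o.n.\ eigenbasis with real eigenvalues. Your explicit verification of the algebraic equivalence (which the paper dismisses as ``simple algebraic manipulations'') and your remark on domain issues in the unbounded case are welcome additions but do not change the substance of the argument.
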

\begin{proof}
Let us first assume that $\Theta_1$ admits a basis of eigenvectors
$\F_1$ which is a Riesz basis, and that its eigenvalues are real:
$\Theta_1\varphi_n^{(1)}=\epsilon_n\varphi_n^{(1)}$, for all $n$.
Hence, as already stated, $\varphi_n^{(1)}=Te_n$ for a certain $T\in
B(\Hil)$, invertible with $T^{-1}\in B(\Hil)$, and an o.n. basis
$\E=\{e_n\}$. Hence the eigenvalue equation for $\Theta_1$ can be rewritten
as \be \Theta_{1,T}\,e_n=\epsilon_n e_n,\qquad n\geq 0,
\label{37}\en where $\Theta_{1,T}:=T^{-1}\Theta_1 T$. Of course this
means that $\Theta_{1,T}=\sum_{n\geq
0}\epsilon_n|e_n\left>\right<e_n|$ which, since $\epsilon_n$ is real
for all $n$, implies that $\Theta_{1,T}$ is self-adjoint. Now,
simple algebraic manipulations show that
$\Theta_{1,T}=\Theta_{1,T}^\dagger$ if and only if
$\Theta_1^\dagger=(T\,T^\dagger)^{-1}\Theta_1 (T\,T^\dagger)$, so
that $\Theta_1$ is $(T\,T^\dagger)^{-1}$-pseudo hermitian, \cite{mosta,mosta2,mosta3}.

Viceversa, let us assume that an operator $T$ exists, bounded with
inverse bounded, such that $\Theta_1$ is
$(T\,T^\dagger)^{-1}$-pseudo hermitian. Then
$\Theta_1^\dagger=(T\,T^\dagger)^{-1}\Theta_1 (T\,T^\dagger)$ which
implies that, defining as before $\Theta_{1,T}:=T^{-1}\Theta_1 T$,
$\Theta_{1,T}=\Theta_{1,T}^\dagger$. Hence, since an o.n. basis
$\E=\{e_n\}$ of eigenvectors of $\Theta_{1,T}$ exists by assumption,
$\Theta_{1,T}\,e_n=\epsilon_n e_n$, for all $n$, it follows that
$\epsilon_n\in\Bbb{R}$. It is further clear that, defining
$\varphi_n^{(1)}=Te_n$ and $\F_1=\{\varphi_n^{(1)},\,n\geq0\}$, this
set is a Riesz basis of eigenvectors of $\Theta_1$, with real
eigenvalues.

\end{proof}

A simple consequence of this Proposition is the following
\begin{cor}
Let us assume that, for a certain $T\in B(\Hil)$ with bounded
inverse, $\Theta_1$ is $(T\,T^\dagger)^{-1}$-pseudo hermitian. Let
us further assume that the IO $x$ is bounded and invertible and that
$[\Theta_1,N_1]=0$. Then $\Theta_2=x^{-1}\Theta_1 x$ admits a Riesz
basis of eigenvectors
$\F_2=\{\varphi_n^{(2)}=x^\dagger\varphi_n^{(1)}, \,n\geq 0\}$ with
real eigenvalues.
\end{cor}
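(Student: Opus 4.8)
The plan is to chain together the previous Proposition (the Riesz-basis characterization of pseudo-hermiticity) with Lemma \ref{lemma3} and the eigenvalue-transfer property of $x$. First I would invoke the Proposition: since $\Theta_1$ is $(TT^\dagger)^{-1}$-pseudo hermitian with $T$ bounded and boundedly invertible, and since in the intended application $T^{-1}\Theta_1 T=\Theta_{1,T}$ is self-adjoint (this is precisely what pseudo-hermiticity with respect to $(TT^\dagger)^{-1}$ means), the Proposition gives that $\Theta_1$ admits a Riesz basis of eigenvectors $\F_1=\{\varphi_n^{(1)}=Te_n\}$ with real eigenvalues $\epsilon_n\in\Bbb{R}$.

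Next I would push this structure through $x$. Because $x$ is bounded and invertible, $\Theta_2^{(\alpha)}$ and $\Theta_2^{(\beta)}$ coincide, so it is legitimate to write $\Theta_2=x^{-1}\Theta_1 x$. Set $\varphi_n^{(2)}=x^\dagger\varphi_n^{(1)}$. Since $[\Theta_1,N_1]=0$ and $N_2=x^\dagger x$ is automatically invertible ($x$ being invertible), the discussion following (\ref{33}) applies: each $\varphi_n^{(2)}$ is an eigenvector of $\Theta_2$ with eigenvalue $\epsilon_n$, provided $\varphi_n^{(2)}\neq 0$; but $\varphi_n^{(2)}=x^\dagger\varphi_n^{(1)}$ and $x^\dagger$ is invertible (as $x$ is), so $\varphi_n^{(2)}\neq 0$ for every $n$. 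The eigenvalues are the same $\epsilon_n$, hence still real. It remains to check that $\F_2$ is a Riesz basis: since $\F_1$ is a Riesz basis, $\varphi_n^{(1)}=Te_n$, so $\varphi_n^{(2)}=x^\dagger T e_n=(x^\dagger T)e_n$, and $x^\dagger T$ is bounded with bounded inverse $T^{-1}(x^\dagger)^{-1}=T^{-1}(x^{-1})^\dagger$. Thus $\F_2$ is the image of the o.n. basis $\E$ under a bounded, boundedly invertible operator, i.e. a Riesz basis, which is exactly the claim.

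I do not expect any genuine obstacle here; the statement is essentially a bookkeeping corollary. The only point that needs a word of care is the hypothesis that $T^{-1}\Theta_1 T$ have an o.n. basis of eigenvectors — this is implicitly the content of "$(TT^\dagger)^{-1}$-pseudo hermitian" in the sense used in the preceding Proposition, so I would either state it explicitly as part of the hypothesis or note that it is what makes the backward direction of the Proposition applicable. A second minor point is making sure $[\Theta_1,N_1]=0$ is genuinely used: it is needed only to guarantee that $\varphi_n^{(2)}=x^\dagger\varphi_n^{(1)}$ is an eigenvector of $\Theta_2$ (rather than using $\tilde\varphi_n^{(2)}=x^{-1}\varphi_n^{(1)}$, which would not need the commutation relation, as the Remark after (\ref{33}) observes). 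So the proof is just: apply the Proposition to get the Riesz basis $\F_1$ with real spectrum; transport it by $x^\dagger$ using invertibility of $x$ and the eigenvalue-transfer argument; conclude $\F_2=(x^\dagger T)\E$ is a Riesz basis of $\Theta_2$-eigenvectors with the same real eigenvalues.
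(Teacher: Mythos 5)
Your proof is correct and is exactly the ``straightforward'' argument the paper omits (the author writes only that the proof will not be given): apply the preceding Proposition to obtain a Riesz basis $\F_1=\{Te_n\}$ of eigenvectors of $\Theta_1$ with real $\epsilon_n$, use $[\Theta_1,N_1]=0$ to check $\Theta_2\varphi_n^{(2)}=\epsilon_n\varphi_n^{(2)}$, and observe that $\varphi_n^{(2)}=(x^\dagger T)e_n$ with $x^\dagger T$ bounded and boundedly invertible, so $\F_2$ is a Riesz basis. You are also right to flag that the corollary silently needs the extra hypothesis of the Proposition's converse direction (that $T^{-1}\Theta_1 T$ admits an o.n.\ eigenbasis, which self-adjointness alone does not guarantee); that is a gap in the statement as printed, not in your argument.
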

The proof is straightforward and will not be given here. However, it
should be mentioned that $(\F_1,\F_2)$ are not biorthogonal in
general, and this makes the procedure non trivial. Indeed we have
already seen that, if $(\F_1,\F_2)$ are biorthogonal, then
$x=S_1^{-1}$ necessarily and, as a consequence, $\Theta_2=\Theta_1$.

\vspace{2mm}

{\bf Remark:--} It may be worth noticing that, as widely discussed
in \cite{bag5}, to any Riesz basis can be associated different
families of coherent states, or of generalized coherent states. In
particular, we can construct two {\em dual} families of coherent
states which, together, produce a decomposition of the identity.
This aspect of the theory will not be considered here.

\vspace{3mm}

We conclude this section reconsidering what we have done in
\cite{bag5} in the present language.

Our starting point is a Riesz basis of $\Hil$,
$\F:=\{\varphi_n,\,n\geq 0\}$. Then, if
$S=\sum_n|\varphi_n\left>\right<\varphi_n|$ is its frame operator,
we define $\hat\F=\{\hat\varphi_n:=S^{-1/2}\varphi_n,\,n\geq 0\}$.
This is an o.n. basis:
$\sum_n|\hat\varphi_n\left>\right<\hat\varphi_n|=\1$ and
$\left<\hat\varphi_n,\hat\varphi_k\right>=\delta_{n,k}$. Let now $A$
be a lowering operator defined by
$A\hat\varphi_n=\sqrt{n}\hat\varphi_{n-1}$, for all $n\geq 0$. In
particular this means that  $A\varphi_0=0$. Its adjoint is a raising
operator: $A^\dagger\hat\varphi_n=\sqrt{n+1}\hat\varphi_{n+1}$,
$n\geq0$, and they satisfy the canonical commutation relation
$[A,A^\dagger]=\1$. Let us now define
$$
a=S^{1/2}AS^{-1/2}\,\mbox{ and }\, b=S^{1/2}A^\dagger S^{-1/2}.
$$
Then $[a,b]=\1$ and, in general, $a\neq b^\dagger$. This is the commutation rule which defines the so-called pseudo-bosons, \cite{tri}. Moreover
$a\varphi_n=\sqrt{n}\varphi_{n-1}$ is a lowering operator for $\F$
while the related raising operator is $b$:
$b\varphi_n=\sqrt{n+1}\varphi_{n+1}$, \cite{bag5}. Now we put
$\F_1\equiv\F$, i.e. $\varphi_n^{(1)}=\varphi_n$ for all $n$, and $\F_2=\{\varphi_n^{(2)}=S^{-1}\varphi_n^{(1)}\}$,
and we see that $\varphi_n^{(1)}=\frac{1}{\sqrt{n!}}\,b^n\varphi_0$
and
$\varphi_n^{(2)}=\frac{1}{\sqrt{n!}}\,(a^\dagger)^n\varphi_0^{(2)}$,
where $\varphi_0^{(2)}=S^{-1}\varphi_0^{(1)}=S^{-1}\varphi_0$.
$(\F_1,\F_2)$ are biorthogonal and, defining $\Theta_1=ba$ and
$\Theta_2=\Theta_1^\dagger=a^\dagger b^\dagger$, we find that
$\Theta_1\varphi_n^{(1)}=n\varphi_n^{(1)}$ and
$\Theta_2\varphi_n^{(2)}=n\varphi_n^{(2)}$, for all $n\geq 0$.

Moreover, $S$ acts as an IO since $\Theta_1 S=S\Theta_2$. But, since
$S$ is invertible, this also implies that $\Theta_2=S^{-1}\Theta_1
S$ so that, recalling that $\Theta_2=\Theta_1^\dagger$, $\Theta_1$
is $S^{-1}$-pseudo hermitian.

The non-triviality of this example, i.e. the fact that $\Theta_2\neq
\Theta_1$, is based on the fact that the main assumption of
Proposition \ref{prop1} is violated:
$[\Theta_1,N_1]=[\Theta_1,S^{-2}]\neq 0$, in general. This can be
understood since we can show, first of all, that $[\Theta_1,S^{-2}]=
0$ if and only if $[A^\dagger A,S^2]=0$. But
$\left<\hat\varphi_l,[A^\dagger A,S^2]\hat\varphi_n
\right>=(l-n)\left<\hat\varphi_l,S^2\hat\varphi_n \right>$ which is
 zero, for all $l$ and $n$, if $S^2$ is diagonal in $\hat\F$
but not in general. Therefore, without further assumptions,
$[\Theta_1,N_1]\neq 0$.

Analogously, it is also possible to check directly that, but  if $S$
is diagonal in $\hat\F$, $\left<\hat\varphi_l,[A^\dagger
A,S]\hat\varphi_n \right>\neq 0$ and, as a consequence,
$[\Theta_1,S^{-1}]\neq 0$ and, yet, $\Theta_2\neq\Theta_1$.

\section{Conclusions}

We have considered some mathematical aspects of IO extending our
previous results also to the situation of non self-adjoint
operators. This has produced interesting results in connection with
PHQM. The role of Riesz
bases, in the present context, has been analyzed in some details and
they appear to be relevant alternatives to o.n. bases whenever we
look for eigenstates of a non self-adjoint operator.

We end the paper with a short summary of our point of view:

In order to relate the eigensystems of $\Theta_1$ and $\Theta_2$ it is
sufficient to have some intertwining relation $\Theta_2 x=x\Theta_1$
and it is not necessary that $x^{-1}$ exists. Indeed, if $\Phi$
is an eigenstate of $\Theta_1$ with eigenvalue $\epsilon$, then
$x\Phi$ is either zero or is an eigenstate of $\Theta_2$ with the
same eigenvalue.

However, if we want to talk of {\em standard} pseudo-hermiticity, $\Theta_2$ must
coincide with $\Theta_1^\dagger$ and $x$ must be invertible. But, since if
$x$ is not invertible our approach still works and produces
(quasi)-isospectral operators, we believe it may be worth investigating whether some other aspects of PHQM, other than the coincidence of the eigenvalues, can be extended in our more general settings. This work, which we have just began here, is now in progress.

\section*{Acknowledgements}

  The author acknowledges financial support by the Murst, within the  project {\em Problemi
Matematici Non Lineari di Propagazione e Stabilit\`a nei Modelli del
Continuo}, coordinated by Prof. T. Ruggeri.

\end{document}